	\newcommand*\circled[1]{\tikz[baseline=(char.base)]{
	            \node[shape=circle,draw,inner sep=2pt] (char) {#1};}}
	\renewcommand{\algorithmicrequire}{\textbf{Input:}}
	\renewcommand{\algorithmicensure}{\textbf{Output:}}
\definecolor{light-gray}{gray}{0.9}
	\newcommand{\PS}{MRR\xspace}
\newcommand{\bor}[1][]{\ifthenelse{\equal{#1}{}}{\mathit{BOR}}{\mathit{BOR}(#1)}}
		\newcommand{\ml}[1][]{\ensuremath{\ifthenelse{\equal{#1}{}}{\mathit{ML}}{\mathit{ML}(#1)}}\xspace}
		\newcommand{\sml}[1][]{\ensuremath{\ifthenelse{\equal{#1}{}}{\mathit{SML}}{\mathit{SML}(#1)}}\xspace}
		\newcommand{\sd}[1][]{\ensuremath{\ifthenelse{\equal{#1}{}}{\mathit{SD}}{\mathit{SD}(#1)}}\xspace}
		\newcommand{\rsd}[1][]{\ensuremath{\ifthenelse{\equal{#1}{}}{\mathit{RSD}}{\mathit{RSD}(#1)}}\xspace}
		\newcommand{\rd}[1][]{\ensuremath{\ifthenelse{\equal{#1}{}}{\mathit{RD}}{\mathit{RD}(#1)}}\xspace}
		\newcommand{\st}[1][]{\ensuremath{\ifthenelse{\equal{#1}{}}{\mathit{ST}}{\mathit{ST}(#1)}}\xspace}
		\newcommand{\bd}[1][]{\ensuremath{\ifthenelse{\equal{#1}{}}{\mathit{BD}}{\mathit{BD}(#1)}}\xspace}
		\newcommand{\pc}[1][]{\ensuremath{\ifthenelse{\equal{#1}{}}{\mathit{PC}}{\mathit{PC}(#1)}}\xspace}
		\newcommand{\dl}[1][]{\ensuremath{\ifthenelse{\equal{#1}{}}{\mathit{DL}}{\mathit{DL}(#1)}}\xspace}
		\newcommand{\ul}[1][]{\ensuremath{\ifthenelse{\equal{#1}{}}{\mathit{UL}}{\mathit{UL}(#1)}}\xspace}
				\newcommand{\bt}[1][]{\ensuremath{\ifthenelse{\equal{#1}{}}{\mathit{BT}}{\mathit{BT}(#1)}}\xspace}
\newcommand{\egal}{\ensuremath{ESR}\xspace}
\newcommand{\pp}{\ensuremath{PP}\xspace}
			\newcommand{\pref}{\ensuremath{\succsim}}
			\newcommand{\spref}{\ensuremath{\succ}}
			\newcommand{\indiff}{\ensuremath{\sim}}
	\newtheorem{theorem}{Theorem}%
	\newtheorem{corollary}{Corollary}%
	\newtheorem{example}{Example}
	\newtheorem{remark}{Remark}
		\newtheorem{fact}{Fact}%
	\newcommand{\ims}{IMS}
	\newlength{\wordlength}
	\newcommand{\wordbox}[3][c]{\settowidth{\wordlength}{#3}\makebox[\wordlength][#1]{#2}}
	\newcommand{\midd}{\mathbin{:}}
\newcommand{\nbh}[1][]{
	\ifthenelse{\equal{#1}{}}{\nu}{\nu(#1)}
}
\newcommand{\cstr}[1][]{
	\ifthenelse{\equal{#1}{}}{\mathscr S}{\cstr(#1)}
}
\newcommand{\choice}[1][]{
	\ifthenelse{\equal{#1}{}}{\mathit{C}}{\choice(#1)}
}
\newcommand{\guar}{\ell}
\begin{document}

	 \title{Participation Incentives in Randomized Social Choice}

	

	\author{Haris Aziz\corref{cor1}} \ead{haris.aziz@data61.csiro.au}
	%
	\address{Data61 and UNSW,\\ Sydney, Australia}

\begin{abstract}
When aggregating preferences of agents via voting, two desirable goals are to identify outcomes that are Pareto optimal and to incentivize agents to participate  in the voting process. We consider participation notions as formalized by Brandl, Brandt, and Hofbauer (2015) and study how far efficiency and participation are achievable by randomized social choice functions in particular when agents' preferences are downward lexicographic (\dl) or satisfy stochastic dominance (\sd). Our results include the followings ones: we prove formal relations between the participation notions with respect to \sd and \dl and we show that the maximal recursive rule satisfies very strong participation with respect to both \sd and \dl. 
	\begin{keyword}
Social choice theory \sep 
Social decision function \sep
stochastic dominance \sep
participation \sep
efficiency \sep
strategyproofness.
	\end{keyword}
\end{abstract}

\maketitle
	
            		\section{Introduction}

            	Two fundamental goals in collective decision making are (1) agents should be incentivized to participate and (2) the outcome should be such that there exists no other outcome that each agent prefers. 	
            	We consider these goals of \emph{participation}~\citep{BrFi83a,Moul88b} and \emph{efficiency}~\citep{ABB14b,Moul03a} in the context of randomized social choice. In randomized social choice, we study randomized social choice functions (referred to as \emph{social decision schemes (SDSs)} which take as input agents' preferences over alternatives and return a probability distribution over the alternatives. The probability distribution can also represent time-sharing arrangements or relative importance of alternatives~\citep{Aziz13b,BMS05a}. For example, agents may vote on the proportion of time different genres of songs are played on a radio channel. This type of preference aggregation is not captured by traditional deterministic voting in which the output is a single discrete alternative which may not be suitable to cater for different tastes.

            	When defining notions such as participation, efficiency, and strategyproofness, one needs to reason about preferences over probability distributions (lotteries). In order to define these properties, we consider the \emph{stochastic dominance} (\sd) and \emph{downward lexicographic} (\dl) lottery extensions to extend preferences over alternatives to preferences over lotteries. A lottery is preferred over another lottery with respect to \sd, if for all utility functions consistent with the ordinal preferences, the former yields as much utility as the latter. \dl is a natural lexicographical refinement of \sd. Lexicographic preferences have received considerable attention within randomized social choice~\cite{AbSo03a, Aziz13b,AzSt14a,Cho12a,SaSe13b,ScVa12a}.

            	Although efficiency and strategyproofness with respect to \sd and \dl have been considered in a series of papers~\citep{Aziz13b,AzSt14a,ABBH12a,ABB13d,BMS05a,Cho12a,Gibb77a,Proc10a}, three notions of participation with respect to \sd  were formalized only recently by \citet{BBH15b}.
            The three notions include very strong (participating is strictly beneficial), strong (participating is at least as helpful as not participating) and standard (not participating is not more beneficial). In contrast to deterministic social choice in which the number of possible outcome is at most the number of alternatives, randomized social choice admits infinite outcomes which makes participation even more meaningful: an agent may be able to perturb the outcome of the lottery slightly in his favour by participating in the voting process. In spirit of the radio channel example, voters should ideally be able to increase the fractional time of their favorite music genres by participating in the vote to decide the durations.  

            Participation is closely related to strategyproofness which requires that misreporting preferences is not beneficial. If agents are truthful  but may consider not participating in voting, then the issue of participation assumes more importance than the issue of untruthful voting~\citep{BBH15b}. Note that for almost all reasonable social choice functions, participating but expressing complete indifference between all alternatives is equivalent to not participating at all.
            	The two central results presented by \citet{BBH15b} were:
            	 \begin{inparaenum}[(1)]
            		\item there exists a social decision scheme (RSD) that satisfies very strong \sd-participation and ex post efficiency (Theorem 4, \citep{BBH15b});
            		\item There exists a social decision scheme (uniform randomization over the Borda winners) that satisfies strong \sd-participation and \sd-efficiency (Theorem 7, \citep{BBH15b}).
            	 \end{inparaenum}


            	Using the work by \citet{BBH15b} as a starting point, we expand the discussion on participation in randomized social choice by considering participation with respect to the lexicographic lottery extension and exploring the relationship between the participation notions with respect to the two extensions. We also consider social decision schemes that were not considered by \citet{BBH15b}) including the 
            \emph{maximal recursive rule (MR)}~\citep{Aziz13b};\emph{egalitarian simultaneous reservation (ESR)}~\citep{AzSt14a};
	  and \emph{serial dictatorship}.
            	We consider the extent to which participation can be achieved by SDSs.

            			\begin{table*}[h!]
            				\centering
            			\label{tab:compare}
            			\begin{tabular}{lccccc}
            			\toprule
            			&\rsd&$\sml$&$BO$&$MR$&\egal\\
            			Properties&&\\
            				\midrule
            			\dl-efficient&--&--&--&--&+\\
            			\sd-efficient&--&+&+&--&+\\
            			ex post efficient&+&+&+&+&+\\
            			\midrule
            			Very strong \sd-participation&+&--&--&\circled{+}&\circled{--}\\
            			Very strong \dl-participation&+&--&\circled{--}&\circled{+}&\circled{--}\\
            			Strong \sd-participation&+&--&+&\circled{+}&\circled{--}\\
            			\bottomrule
            			\end{tabular}
            			\caption{A comparison of axiomatic properties of different social decision schemes: $RSD$ (random serial dictatorship), $SML$ (strict maximal lotteries), $BO$ (uniform randomization over Borda winners), $MR$ (maximal recursive rule), and \egal (egalitarian simultaneous reservation). The circled results are from this paper. All the schemes are anonymous and neutral.}
            			\label{table:summary:egal}
            			\end{table*}

            	\paragraph{Contributions}
	
            	Our contributions include relations among participation notions (see Figure~\ref{fig:part-relations}) as well as understanding the relative merits of SDSs in terms of efficiency and participation (see Table~\ref{table:summary:egal}). 

            	\begin{itemize}
            		\item We relate participation concepts with respect to \dl and \sd (see Figure~\ref{fig:part-relations}). Although very strong \sd-participation implies very strong \dl-participation, \dl-participation implies \sd-participation. Moreover a combination of strong \sd-participation and \dl-participation implies very strong \sd-participation.
            		\item We show that 	the Maximal Recursive ($MR$) rule~\citep{Aziz13b} satisfies very strong \sd-participation hence being the first known SDS to date to satisfy the property and also be  polynomial-time computable. Previously, \rsd was proved to satisfy very strong \sd-participation~\citep{BBH15b} but 
            \rsd probabilities are \#P-complete to compute~\citep{ABB13b}.
		
	
            	\item We point that \egal~\citep{AzSt14a} does not satisfy strong \sd-participation.
	
            			\item We highlight that although random serial dictatorship satisfies very strong \sd-participation and hence very strong \dl-participation, serial dictatorship does not even satisfy very strong \dl-participation. 
            It follows that if a rule satisfies $\dl$-strategyproofness, it need not satisfy very strong \dl-participation. Similarly, if a rule satisfies $\sd$-strategyproofness, it need not satisfy very strong \sd-participation.
            		\end{itemize}

            \section{Related Work}

            One of the first formal works on randomized social choice is by \citet{Gibb77a}. 
            The literature in randomized social choice has grown over the years 
            although it is much less developed in comparison to deterministic social choice. The main result of \citet{Gibb77a} was that random dictatorship in which each agent has uniform probability of choosing his most preferred alternative is the unique anonymous, strategyproof and ex post efficient SDS. Random serial dictatorship (\rsd) is the natural generalization of random dictatorship for weak preferences but the \rsd lottery is \#P-complete to compute~\citep{ABB13b}. 

            \citet{BoMo01a} initiated the use of stochastic dominance to consider various notions of strategyproofness, efficiency, and fairness conditions in the domain of \emph{random assignments} which is a special type of social choice setting. They proposed the 
            probabilistic serial mechanism --- a desirable random assignment mechanism. 
            \citet{Cho12a} extended the approach of \citet{BoMo01a} by considering other lottery extensions such as ones based on lexicographic preferences.

            The tradeoff of efficiency and strategyproofness for SDSs was formally considered in a series of papers~\citep{Aziz13b, AzSt14a, ABBH12a, ABB13d, BMS05a}.
            \citet{AzSt14a} presented a generalization --- \emph{Egalitarian Simultaneous Reservation} (\egal) --- of the probabilistic serial mechanism to the domain of social choice.  \citet{Aziz13b} proposed the \emph{maximal recursive (\PS)} SDS which is similar to the random serial dictatorship but for which the lottery can be computed in polynomial time.  

            \citet{BBH15b} showed that the strict maximal lottery SDS satisfies \sd-efficiency and \sd-participation; uniform randomization over Borda winners satisfies strong \sd-participation; and \rsd (random serial dictatorship) satisfies very strong \sd-participation. The main open problem posed by \citet{BBH15b} was whether there exists an SDS that satisfies very strong SD-participation and SD-efficiency. Although random dictatorship (defined for strict preferences) satisfies both properties, it is unclear whether an SDS satisfies both properties when agents may express ties in their preferences. In more recent work, \citet{BBH15d} study the connection between welfare maximization and participation.


            		\section{Preliminaries}

            			Consider the social choice setting in which there is a set of agents $N=\{1,\ldots, n\}$, a set of alternatives $A=\{a_1,\ldots, a_m\}$ and a preference profile $\pref=(\pref_1,\ldots,\pref_n)$ such that each $\pref_i$ is a complete and transitive relation over $A$. Let $\mathcal{F}(\mathbb{N})$ denote the set of all finite and non-empty subsets of $\mathbb{N}$.
            			We write~$a \pref_i b$ to denote that agent~$i$ values alternative~$a$ at least as much as alternative~$b$ and use~$\spref_i$ for the strict part of~$\pref_i$, i.e.,~$a \spref_i b$ iff~$a \pref_i b$ but not~$b \pref_i a$. Finally, $\indiff_i$ denotes~$i$'s indifference relation, i.e., $a \indiff_i b$ iff both~$a \pref_i b$ and~$b \pref_i a$.
            			The relation $\pref_i$ results in equivalence classes $E_i^1,E_i^2, \ldots, E_i^{k_i}$ for some $k_i$ such that $a\spref_i a'$ iff $a\in E_i^l$ and $a'\in E_i^{l'}$ for some $l<l'$. Often, we will use these equivalence classes to represent the preference relation of an agent as a preference list
            			$i\midd E_i^1,E_i^2, \ldots, E_i^{k_i}$.
            		For example, we will denote the preferences $a\indiff_i b\spref_i c$ by the list $i:\ \{a,b\}, \{c\}$. 
            		For any set of alternative $A'$, we will refer by $\max_{\pref_i}(A')$ the set of most preferred alternatives according to preference $\pref_i$.

            		An agent $i$'s preferences are \emph{dichotomous} iff he partitions the alternatives into just two equivalence classes, i.e., $k_i=2$. An agent $i$'s preferences are \emph{strict} iff $\pref_i$ is antisymmetric, i.e.
            		 all equivalence classes have size 1.

            			Let $\Delta(A)$ denote the set of all \emph{lotteries} (or \emph{probability distributions}) over $A$.
            			The support of a lottery $p \in \Delta(A)$, denoted by $\text{supp}(p)$, is the set of all alternatives to which $p$ assigns a positive probability, i.e., $\text{supp}(p) = \{x \in A \mid p(x)>0\}$. We will write $p(a)$ for the probability of alternative $a$ and we will represent a lottery as
            			$p_1a_1 + \cdots + p_ma_{m}$
            			where $p_j=p(a_j)$ for $j\in \{1,\ldots, m\}$. For $A'\subseteq A$, we will (slightly abusing notation) denote $\sum_{a\in A'}p(a)$ by $p(A')$.

            			A \emph{social decision scheme} is a function $f: \mathcal{R}^n \rightarrow \Delta(A)$. If $f$ yields a set rather than a single lottery, we call $f$ a \emph{correspondence}.
            		Two minimal fairness conditions for SDSs are \emph{anonymity} and \emph{neutrality}. Informally, they require that the SDS should not depend on the names of the agents or alternatives respectively.

            			In order to reason about the outcomes of SDSs, we need to determine how agents compare lotteries. A \emph{lottery extension} extends preferences over alternatives to (possibly incomplete) preferences over lotteries.
            			Given $\pref_i$ over $A$, a \emph{lottery extension} $\mathcal{E}$ extends $\pref_i$ to $\pref_i^\mathcal{E}$ over the set of lotteries $\Delta(A)$. We now define some particular lottery extensions that we will later refer to.
            			\begin{itemize}
            		\item 	Under \emph{stochastic dominance (SD)}, an agent prefers a lottery that, for each alternative $x \in A$, has a higher probability of selecting an alternative that is at least as good as $x$. Formally, $p \pref_i^{\sd} q$ iff $\forall y\in\nolinebreak A \colon \sum_{x \in A: x \pref_i y} p(x) \geq \sum_{x \in A: x \pref_i y} q(x).$

            		\item 	In the  \emph{downward lexicographic (\dl)} extension, an agent prefers the lottery with higher probability for his most preferred equivalence class, in case of equality, the one with higher probability for the second most preferred equivalence class, and so on.
            			Formally, $p \pref_i^{\dl} q$ iff for the smallest (if any) $l$ with $p(E^l_i) \neq q(E^l_i)$ we have $p(E^l_i) > q(E^l_i)$.
            		\end{itemize}

            		\sd~\cite{BoMo01a} is particularly important because $p \pref^{\sd} q$ iff $p$ yields at least as much expected utility as $q$ for any von-Neumann-Morgenstern utility function consistent with the ordinal preferences \cite{Cho12a}.

            		We say a lottery extension $\mathcal{E}$ is a \emph{refinement} of $\mathcal{E}'$ if 
            		\[p\pref_i^{\mathcal{E}'} q \iff p \succ_i^{\mathcal{E}} q.\]
		
            		We say a lottery extension $\mathcal{E}$  is \emph{complete} if $p \pref_i^{\mathcal{E}} q$ or $q \pref_i^{\mathcal{E}} p$ for all $p,q\in \Delta(A)$ and $\pref_i$.
            \dl refines \sd to a complete relation based on the natural lexicographic relation over lotteries~\cite{ScVa12a,AbSo03a,Cho12a}.

            The following example illustrates a social choice setting where randomized outcomes are compared by an agent with respect to the \sd and \dl relations.
            \begin{example}
            	Consider the preference profile:
            	\begin{align*}
            		1:\quad&a,b,c,d\\
            		2:\quad&\{a,b\}, \{c,d\}\\
            		3:\quad&\{c,d\},\{a,b\}
            	\end{align*}
            Agent 1 most prefers $a$, then $b$, $c$, and $d$ whereas agent $2$ is indifferent between $a$ and $b$.
            Then $\frac{1}{2}a+\frac{1}{2}c$ is a possible randomized outcome in which the probability of $a$ and $c$ is half each.
            Note that that $\frac{2}{3}a+\frac{1}{3}d \not\pref_1^{\sd} \frac{1}{2}a+\frac{1}{2}c$ but $\frac{2}{3}a+\frac{1}{3}d \not\pref_i^{\dl} \frac{1}{2}a+\frac{1}{2}c$.
            	\end{example}

            		\paragraph{Efficiency and strategyproofness}

            		Let ${\mathcal{E}}$ be any lottery extension. A lottery $p$ is \emph{${\mathcal{E}}$-efficient} iff there exists no lottery $q$ such that $q \pref_i^{\mathcal{E}} p$ for all $i\in N$ and $q \spref_i^{\mathcal{E}} p$ for some $i\in N$. An SDS is ${\mathcal{E}}$-efficient iff it always returns an ${\mathcal{E}}$-efficient lottery. A standard efficiency notion that cannot be phrased in terms of lottery extensions is \emph{ex post efficiency}. A lottery is ex post efficient iff it is a lottery over Pareto optimal alternatives. It is the case that 	\dl-efficiency $\implies$ \sd-efficiency $\implies$ ex post efficiency. 

            		\begin{figure}[h!]

            			\begin{center}

            			\scalebox{0.9}{
            			\begin{tikzpicture}
            			\tikzstyle{pfeil}=[->,>=angle 60, shorten >=1pt,draw]
            			\tikzstyle{onlytext}=[]

            				\node[onlytext] (DL) at (0,2) {$\dl$-efficiency};
            			\node[onlytext] (SD) at (0,0) { $\sd$-efficiency};
            			\node[onlytext] (expost) at (0,-2) {ex post efficiency};

            			 \draw[pfeil] (DL) to (SD);
            			 \draw[pfeil] (SD) to (expost);

            			\end{tikzpicture}

            			}

            			\end{center}	\caption{Relations between efficiency concepts.}
            			\label{fig:part-relations}
            		\end{figure}
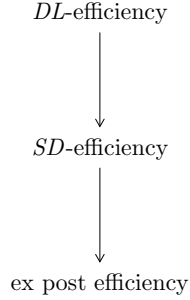

            		An SDS~$f$ is ${\mathcal{E}}$-\emph{ma\-nip\-u\-la\-ble} iff there exists an agent $i \in N$ and preference profiles $\pref$ and $\pref'$ with $\pref_j=\pref_j'$ for all $j \ne i$ such that $f(\pref') \spref_i^{\mathcal{E}} f(\pref)$.
            		An SDS is \emph{weakly} ${\mathcal{E}}$-\emph{strategyproof} iff it is not ${\mathcal{E}}$-manipulable,
            		it is ${\mathcal{E}}$-\emph{strategyproof} iff $f(\pref) \pref_i^{\mathcal{E}} f(\pref')$ for all $\pref$ and $\pref'$ with $\pref_j=\pref_j'$ for all $j \neq i$. Note that \sd-strategyproofness is equivalent to strategyproofness in the Gibbard sense. It is known that \sd-strategyproof  $\implies$ \dl-strategyproof $\implies$ weak \sd-strategyproof.


            	\section{Participation}

            	For any lottery extension $\mathcal{E}$, we can define three notions of participation~\citep{BBH15b}.
	
            	\begin{figure*}[h!]

            		\begin{center}

            		\scalebox{0.7}{
            		\begin{tikzpicture}
            		\tikzstyle{pfeil}=[->,>=angle 60, shorten >=1pt,draw]
            		\tikzstyle{onlytext}=[]


            		\node[onlytext] (vsSD) at (4,4) {\large very strong $\sd$-participation};
            			\node[onlytext] (vsDL) at (8,2) {\large very strong $\dl$-participation};
            			\node[onlytext] (sDL) at (8,0) {\large strong $\dl$-participation};
            			\node[onlytext] (DL) at (8,-2) {\large $\dl$-participation};
            		\node[onlytext] (sSD) at (0,2) {\large strong $\sd$-participation};
            		\node[onlytext] (SD) at (0,-4) {\large $\sd$-participation};

            		 \draw[pfeil] (vsSD) to (vsDL);
            		  \draw[pfeil] (vsSD) to (sSD);
            		   \draw[pfeil] (vsDL) to (sDL);
            		    \draw[pfeil] (sSD) to (SD);
            	 \draw[pfeil] (sSD) to (sDL);
            	 	 \draw[pfeil] (sDL) to (DL);
            		  	 \draw[pfeil] (DL) to (sDL);
            			 	  	 \draw[pfeil] (DL) to (SD);

            		\end{tikzpicture}

            		}

            		\end{center}	\caption{Relations between participation concepts.}
            		\label{fig:part-relations}
            	\end{figure*}
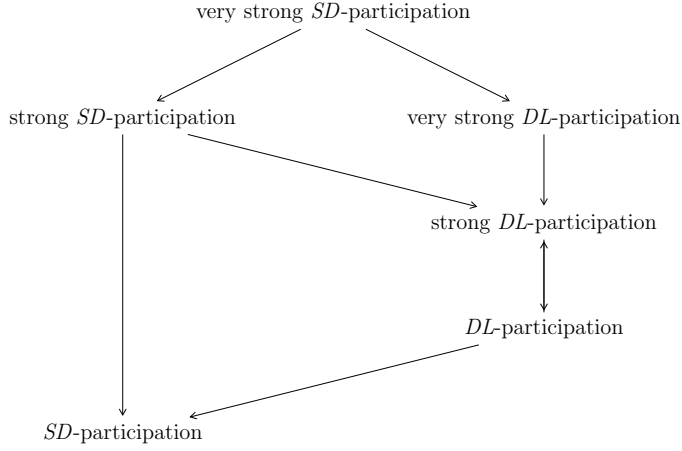

            	%
            	%
            	%
            	%
            	%
            	%
            	%
            	%
            	%

            	\begin{itemize}
            		\item Formally, an SDS $f$ is $\mathcal{E}$-manipulable (by strategic abstention) if there exist $\pref \in\mathcal{\pref}^N$ for some $N\in\mathcal{F}(\mathbb{N})$ and $i\in N$ such that $f(\pref_{-i}) \mathrel{\succ_i^{\mathcal{E}}} f(\pref)$. If an SDS is not $\mathcal{E}$-manipulable it satisfies \emph{$\mathcal{E}$-participation}.
            		\item An SDS $f$ satisfies \emph{strong $\mathcal{E}$-participation} if $f(\pref) \mathrel{\pref_i^{\mathcal{E}}} f(\pref_{-i})$ for all $N\in\mathcal{F}(\mathbb{N})$, $\pref\in\mathcal{R}^N$, and $i\in N$. 
            		\item An SDS $f$ satisfies \emph{very strong $\mathcal{E}$-participation} if  for all $N\in\mathcal{F}(\mathbb{N})$, $\pref\in\mathcal{R}^N$, and $i \in N$, $f(\pref)\mathrel{\pref_i^{\mathcal{E}}} f(\pref_{-i})$ and 
            	\[
            	  f(\pref)\mathrel{\succ_i^{\mathcal{E}}} f(\pref_{-i}) \text{ whenever }\exists p\in \Delta(A)\colon p\mathrel{\succ_i^{\mathcal{E}}} f(\pref_{-i}).
            	\]
            	\end{itemize}

            	\begin{fact}
            		For any lottery extension, very strong $\mathcal{E}$-participation implies strong $\mathcal{E}$-participation which implies $\mathcal{E}$-participation.
            		\end{fact}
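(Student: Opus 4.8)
The plan is to prove the two implications separately, both by directly unwinding the definitions; neither requires any structural property of the particular lottery extension $\mathcal{E}$ beyond the standard convention (the same one the paper uses for $\spref_i$ on alternatives, and implicitly in its definition of refinement) that $\succ_i^{\mathcal{E}}$ denotes the strict part of $\pref_i^{\mathcal{E}}$, i.e.\ $p \succ_i^{\mathcal{E}} q$ holds exactly when $p \pref_i^{\mathcal{E}} q$ and not $q \pref_i^{\mathcal{E}} p$.

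First I would dispose of ``very strong $\implies$ strong''. This is immediate from the shape of the definitions: very strong $\mathcal{E}$-participation demands, as one of its two conjuncts, that $f(\pref) \pref_i^{\mathcal{E}} f(\pref_{-i})$ hold for every $N\in\mathcal{F}(\mathbb{N})$, every $\pref \in \mathcal{R}^N$, and every $i \in N$. This first conjunct is word-for-word the defining condition of strong $\mathcal{E}$-participation, so any $f$ satisfying the former satisfies the latter; nothing else need be verified.

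Next I would prove ``strong $\implies$ $\mathcal{E}$-participation'' by contraposition. Suppose $f$ fails $\mathcal{E}$-participation, i.e.\ $f$ is $\mathcal{E}$-manipulable by strategic abstention. Then there exist $N$, $\pref \in \mathcal{R}^N$, and $i \in N$ with $f(\pref_{-i}) \succ_i^{\mathcal{E}} f(\pref)$. By the meaning of the strict part this entails, in particular, that $f(\pref) \pref_i^{\mathcal{E}} f(\pref_{-i})$ does \emph{not} hold. But strong $\mathcal{E}$-participation asserts precisely $f(\pref) \pref_i^{\mathcal{E}} f(\pref_{-i})$ for all $N,\pref,i$, so the same $f$ cannot satisfy it. Contraposing yields that strong $\mathcal{E}$-participation implies $\mathcal{E}$-participation, and chaining the two implications establishes the Fact.

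The argument is entirely definitional, so there is no genuine obstacle. The one point that demands care is the possible incompleteness of lottery extensions such as \sd: because $\pref_i^{\mathcal{E}}$ need not be complete, one must not silently identify ``not $q \succ_i^{\mathcal{E}} p$'' with ``$p \pref_i^{\mathcal{E}} q$''. The proof sidesteps this pitfall by invoking only the correct reading of the strict relation, namely that $f(\pref_{-i}) \succ_i^{\mathcal{E}} f(\pref)$ forces the failure of $f(\pref) \pref_i^{\mathcal{E}} f(\pref_{-i})$, which is exactly what is needed to reach the contradiction.
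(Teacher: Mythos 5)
Your proof is correct and is precisely the definitional argument the paper relies on: the Fact is stated without proof because both implications follow immediately from unwinding the definitions, exactly as you do (the strong condition is a conjunct of the very strong one, and a manipulation instance $f(\pref_{-i}) \mathrel{\succ_i^{\mathcal{E}}} f(\pref)$ directly negates $f(\pref) \mathrel{\pref_i^{\mathcal{E}}} f(\pref_{-i})$). Your care in reading $\succ_i^{\mathcal{E}}$ as the strict part of a possibly incomplete relation, rather than as the negation of the reverse weak preference, is exactly the right precaution for extensions such as \sd.
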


            		Next, we make further general observations about the relation between participation notions.
	
            		\begin{theorem}
            		For any complete lottery extension, strong $\mathcal{E}$-participation is equivalent to $\mathcal{E}$-participation.
            			\end{theorem}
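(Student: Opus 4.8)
The plan is to prove the two implications separately. The direction from strong $\mathcal{E}$-participation to $\mathcal{E}$-participation is immediate and requires no completeness assumption: it is already contained in the preceding Fact, but it can also be seen directly, since $f(\pref) \pref_i^{\mathcal{E}} f(\pref_{-i})$ rules out $f(\pref_{-i}) \succ_i^{\mathcal{E}} f(\pref)$ by the very definition of the strict part $\succ_i^{\mathcal{E}}$ of $\pref_i^{\mathcal{E}}$. So the substantive content is the converse, and this is where completeness enters.

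For the converse, I would suppose that $f$ satisfies $\mathcal{E}$-participation and fix an arbitrary $N \in \mathcal{F}(\mathbb{N})$, a profile $\pref \in \mathcal{R}^N$, and an agent $i \in N$. Writing $p = f(\pref)$ and $q = f(\pref_{-i})$, the goal is to establish $p \pref_i^{\mathcal{E}} q$, which is exactly strong $\mathcal{E}$-participation at this instance. Since $\mathcal{E}$ is complete, either $p \pref_i^{\mathcal{E}} q$ or $q \pref_i^{\mathcal{E}} p$ holds. In the first case there is nothing more to do.

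In the remaining case $q \pref_i^{\mathcal{E}} p$, I would argue that $p \pref_i^{\mathcal{E}} q$ must hold as well: if it did not, then by the definition of the strict part we would have $q \succ_i^{\mathcal{E}} p$, that is $f(\pref_{-i}) \succ_i^{\mathcal{E}} f(\pref)$, contradicting $\mathcal{E}$-participation. Hence $p \pref_i^{\mathcal{E}} q$ in this case too. There is essentially no hard step here; the only point requiring care is to unwind $\succ_i^{\mathcal{E}}$ as the strict part of $\pref_i^{\mathcal{E}}$ and to observe that completeness is precisely what rules out $p$ and $q$ being $\mathcal{E}$-incomparable, which is the single obstruction to the equivalence in the general (incomplete) case.
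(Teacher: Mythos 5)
Your proposal is correct and follows essentially the same route as the paper: the paper also derives strong $\mathcal{E}$-participation from $\mathcal{E}$-participation by observing that, under completeness, the negation of $f(\pref_{-i}) \succ_i^{\mathcal{E}} f(\pref)$ is equivalent to $f(\pref) \pref_i^{\mathcal{E}} f(\pref_{-i})$. You merely unwind that equivalence in more explicit detail and note separately that the converse direction needs no completeness, which matches the preceding Fact.
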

            			\begin{proof}
            				If an SDS satisfies $\mathcal{E}$-participation, then 
            				it cannot be that $f(\pref_{-i}) \mathrel{\succ_i^{\mathcal{E}}} f(\pref)$. Since $\mathcal{E}$ is complete, the statement is equivalent to saying that $f(\pref) \mathrel{\pref_i^{\mathcal{E}}} f(\pref_{-i})$ which is equivalent to satisfying strong $\mathcal{E}$-participation.
            				\end{proof}
	
            		\begin{theorem}
            		For any complete lottery extension $\mathcal{E}$ that is a refinement of \sd, the following relations hold: 
            		\begin{enumerate}
            			\item 	Strong \sd-participation implies Strong $\mathcal{E}$-participation.
            			\item 	Very strong \sd-participation implies very strong $\mathcal{E}$-participation.
            			\item $\mathcal{E}$-participation implies \sd-participation.
            		\end{enumerate}
            			\end{theorem}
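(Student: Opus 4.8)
The plan is to reduce everything to two order-theoretic consequences of $\mathcal{E}$ being a complete refinement of $\sd$, applied throughout with $r := f(\pref_{-i})$. From the refinement property I would extract (i) $p \succ_i^{\sd} q \implies p \succ_i^{\mathcal{E}} q$, and (ii) $p \pref_i^{\sd} q \implies p \pref_i^{\mathcal{E}} q$; property (ii) holds because an $\sd$-weak-preference is either strict, and then covered by (i), or an $\sd$-indifference ($p(E_i^l)=q(E_i^l)$ for every $l$), which a refinement of $\sd$ again records as indifference. With (i) and (ii) in hand, Parts 1 and 3 become one-liners, and the whole difficulty concentrates in the strict clause of Part 2.

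For Part 3 I would argue by contraposition: a violation of $\sd$-participation supplies $\pref$ and $i$ with $r \succ_i^{\sd} f(\pref)$, and (i) turns this into $r \succ_i^{\mathcal{E}} f(\pref)$, a violation of $\mathcal{E}$-participation. For Part 1, strong $\sd$-participation gives $f(\pref)\pref_i^{\sd} r$ for every $\pref$ and $i$, and (ii) upgrades this directly to $f(\pref)\pref_i^{\mathcal{E}} r$, which is strong $\mathcal{E}$-participation. The weak clause of Part 2 is identical to Part 1.

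The substantive step is the strict clause of Part 2: assuming some $p$ satisfies $p \succ_i^{\mathcal{E}} r$, I must produce $f(\pref)\succ_i^{\mathcal{E}} r$. The difficulty is that this hypothesis is stated with respect to $\mathcal{E}$, whereas the tool I want to apply — the strict clause of very strong $\sd$-participation — only fires when $r$ admits an $\sd$-improvement. The bridge I plan to use is the degenerate lottery $e^*$ that puts probability $1$ on a single most-preferred alternative $a^* \in \max_{\pref_i}(A)=E_i^1$: since $a^*$ lies in $\{x : x \pref_i y\}$ for every $y$, $e^*$ weakly $\sd$-dominates every lottery, in particular $e^* \pref_i^{\sd} r$. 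I would then split on whether this domination is strict. If $e^*\succ_i^{\sd} r$, then $r$ has an $\sd$-improvement, the strict clause of very strong $\sd$-participation yields $f(\pref)\succ_i^{\sd} r$, and (i) promotes this to $f(\pref)\succ_i^{\mathcal{E}} r$. If instead $e^*\not\succ_i^{\sd} r$, then because $e^*\pref_i^{\sd} r$ always holds we must in fact have $r \pref_i^{\sd} e^*$, which (reading the $\sd$ inequality at $y=a^*$, where $\{x : x\pref_i a^*\}=E_i^1$) forces $r(E_i^1)=1$; such an $r$ weakly $\sd$-dominates every lottery, so by (ii) $r \pref_i^{\mathcal{E}} p$ for all $p$, contradicting the assumed $p\succ_i^{\mathcal{E}} r$. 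Hence the second case is vacuous and the strict clause follows.

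I expect this last case analysis to be the only genuine obstacle: it amounts to showing that $\mathcal{E}$-maximality and $\sd$-maximality of $r$ coincide, so that an $\mathcal{E}$-strict-improvement over $r$ guarantees an $\sd$-strict-improvement. The care needed is that $\sd$ is only a partial order, so ``$r$ is not strictly $\sd$-dominated'' does not by itself mean ``$r$ dominates everything''; the explicit witness $e^*$, together with the elementary fact that a lottery concentrated on the top class $E_i^1$ is $\sd$-greatest for agent $i$, is what closes that gap.
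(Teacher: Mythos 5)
Your proposal is correct, and for Parts~1 and~3 and the weak clause of Part~2 it coincides with the paper's argument: both reduce to the two monotonicity facts that a refinement carries $\pref_i^{\sd}$ to $\pref_i^{\mathcal{E}}$ and $\succ_i^{\sd}$ to $\succ_i^{\mathcal{E}}$. Where you genuinely diverge is the strict clause of Part~2, and there your version is the more careful one. The paper simply asserts that very strong $\sd$-participation yields $f(\pref)\succ_i^{\sd} f(\pref_{-i})$ ``whenever $\exists p\colon p\succ_i^{\mathcal{E}} f(\pref_{-i})$,'' silently passing from an $\mathcal{E}$-improvement of $f(\pref_{-i})$ to the $\sd$-improvement that the definition actually requires as a trigger; since $\sd$ is only a partial order, this passage is not automatic. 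Your bridge --- the degenerate lottery $e^*$ on an alternative of $E_i^1$, which weakly $\sd$-dominates everything, together with the dichotomy that either $e^*\succ_i^{\sd} f(\pref_{-i})$ or $f(\pref_{-i})$ concentrates all mass on $E_i^1$ and is therefore $\sd$-greatest (hence $\mathcal{E}$-greatest, contradicting the assumed $\mathcal{E}$-improvement) --- is exactly the missing lemma, and it is proved correctly. In short, you buy a complete argument at the cost of one extra case analysis that the paper omits. One minor caveat: your derivation of fact (ii) from fact (i) assumes that $\sd$-indifference is preserved as $\mathcal{E}$-indifference; the paper's formal definition of ``refinement'' is stated ambiguously (its displayed biconditional cannot be read literally), but the paper's own proof uses both (i) and (ii) just as you do, so this does not put you at any disadvantage relative to the source.
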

            \begin{proof}
            	Consider a complete lottery extension $\mathcal{E}$ that is a refinement of \sd
            	\begin{enumerate}
            		\item An SDS $f$ satisfies strong \sd-participation if $f(\pref) \mathrel{\pref_i^{\sd}} f(\pref_{-i})$ which implies that $f(\pref) \mathrel{\pref_i^{\mathcal{E}}} f(\pref_{-i})$ which is equivalent to $f$ satisfying strong $\mathcal{E}$-participation.
            		\item If an SDS $f$ satisfies very strong \sd-participation, it satisfies  
            		strong \sd-participation which means it satisfies strong $\mathcal{E}$-participation. Since $f$ satisfies very strong \sd-participation, 
            $f(\pref)\mathrel{\succ_i^{\sd}} f(\pref_{-i}) \text{ whenever }\exists p\in \Delta(A)\colon p\mathrel{\succ_i^{\mathcal{E}}} f(\pref_{-i}).$ Since $\mathcal{E}$ is a refinement of \sd, it implies that $f(\pref)\mathrel{\succ_i^{\mathcal{E}}} f(\pref_{-i}) \text{ whenever }\exists p\in \Delta(A)\colon p\mathrel{\succ_i^{\mathcal{E}}} f(\pref_{-i}).$
            \item Assume $f$ does not satisfy \sd-participation. Then $f(\pref_{-i}) \mathrel{\succ_i^{\sd}} f(\pref)$ for some profile $\pref$. This implies that $f(\pref_{-i}) \mathrel{\succ_i^{\mathcal{E}}} f(\pref)$ which means that $f$ does not satisfy $\mathcal{E}$-participation.
            	\end{enumerate}
            	\end{proof}

            	\begin{corollary}
            		The following relations hold:
            		\begin{enumerate}
            			\item 	Strong \sd-participation implies Strong \dl-participation.
            			\item 	Very strong SD-participation implies very strong DL-participation.
            			\item DL-participation implies \sd-participation.
            		\end{enumerate}
            		\end{corollary}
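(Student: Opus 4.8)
The plan is to obtain the corollary as a direct instantiation of the preceding theorem with $\mathcal{E} = \dl$. Its three parts coincide exactly with the three parts of the theorem once $\mathcal{E}$ is taken to be the downward lexicographic extension, so the only thing that must be established is that $\dl$ meets the two hypotheses of that theorem: that it is a \emph{complete} lottery extension and that it is a \emph{refinement} of $\sd$. Both facts are already recorded in the preliminaries, so the real work is simply to confirm them explicitly and then invoke the theorem.

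First I would verify completeness. Given any $p,q \in \Delta(A)$ and any $\pref_i$ with equivalence classes $E_i^1,\ldots,E_i^{k_i}$, either $p(E_i^l)=q(E_i^l)$ for every $l$, in which case $p \indiff_i^{\dl} q$, or there is a smallest index $l$ at which these class-probabilities differ, and then exactly one of $p(E_i^l)>q(E_i^l)$ or $q(E_i^l)>p(E_i^l)$ holds, giving $p \spref_i^{\dl} q$ or $q \spref_i^{\dl} p$. In every case the two lotteries are comparable, so $\dl$ is complete. Next I would confirm the refinement property in the form the theorem's proof actually uses, namely that $p \pref_i^{\sd} q$ implies $p \pref_i^{\dl} q$ and that $p \spref_i^{\sd} q$ implies $p \spref_i^{\dl} q$. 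These follow because $\sd$-dominance forces all cumulative upper-contour-set probabilities to be weakly larger under $p$, which in particular makes the first differing equivalence-class probability favour $p$; strictness is preserved because a strict $\sd$-advantage cannot be reversed at the first differing class.

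With both hypotheses in hand, applying the theorem with $\mathcal{E} = \dl$ yields each of the three claimed implications immediately: part (i) from part (i) of the theorem, part (ii) from part (ii), and part (iii) from part (iii). I do not anticipate any genuine obstacle, since all the content resides in the earlier general theorem and the corollary is a mechanical specialization; the only point deserving a moment's care is to state the refinement condition precisely as the theorem's proof invokes it (weak preference preserved, and strict preference preserved) so that the specialization is clean.
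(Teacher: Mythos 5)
Your proposal is correct and matches the paper's route exactly: the paper presents this corollary as an immediate instantiation of the preceding theorem with $\mathcal{E}=\dl$, relying on the fact stated in the preliminaries that \dl is a complete refinement of \sd. Your explicit verification of completeness and of the refinement property is a reasonable (and slightly more careful) elaboration of what the paper leaves implicit.
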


            The following statement also follows directly from the definitions.

            		\begin{theorem}\label{th:comb}
            		For any complete lottery extension $\mathcal{E}$ that is a refinement of \sd, 
            the combination of strong \sd-participation and  very strong $\mathcal{E}$-participation implies very strong \sd-participation.
            			\end{theorem}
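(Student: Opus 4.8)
The plan is to check the two requirements in the definition of very strong \sd-participation one at a time, after fixing $N\in\mathcal{F}(\mathbb{N})$, $\pref\in\mathcal{R}^N$, and $i\in N$ and abbreviating $P=f(\pref)$ and $Q=f(\pref_{-i})$. The weak requirement $P\pref_i^{\sd}Q$ is handed to us for free: it is exactly the conclusion of strong \sd-participation, which is one of our two hypotheses. So the entire content of the statement lives in the strict requirement, and that is where I would direct the argument.

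For the strict requirement, assume there is some $p\in\Delta(A)$ with $p\succ_i^{\sd}Q$; the goal is to deduce $P\succ_i^{\sd}Q$. First I would transport the witness across the refinement: because $\mathcal{E}$ refines \sd, a strict \sd-preference remains a strict $\mathcal{E}$-preference, so $p\succ_i^{\mathcal{E}}Q$. Hence there exists a lottery strictly $\mathcal{E}$-preferred to $Q$, which is precisely the trigger clause in the definition of very strong $\mathcal{E}$-participation; invoking that hypothesis gives $P\succ_i^{\mathcal{E}}Q$. At this point I hold a strict $\mathcal{E}$-preference of $P$ over $Q$ together with the weak \sd-preference $P\pref_i^{\sd}Q$ coming from strong \sd-participation, and the remaining task is to upgrade the latter to a strict \sd-preference.

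This upgrade is the only delicate step, and it is where I expect the real work to sit. Given $P\pref_i^{\sd}Q$, the sole alternative to $P\succ_i^{\sd}Q$ is \sd-indifference, \ie $P\pref_i^{\sd}Q$ and $Q\pref_i^{\sd}P$ simultaneously. I would rule this out by contradiction. Unwinding the definition of \sd, indifference of $P$ and $Q$ forces $P(E_i^1\cup\cdots\cup E_i^l)=Q(E_i^1\cup\cdots\cup E_i^l)$ for every $l$, equivalently $P(E_i^l)=Q(E_i^l)$ on each indifference class. Two lotteries that agree on every class cannot be separated by a complete refinement of \sd and must be ranked $\mathcal{E}$-indifferent (this is exactly how \dl behaves, its indifference relation coinciding with that of \sd). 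But $P\indiff_i^{\mathcal{E}}Q$ contradicts the $P\succ_i^{\mathcal{E}}Q$ obtained above, so \sd-indifference is impossible and $P\succ_i^{\sd}Q$ holds.

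Combining the two clauses yields very strong \sd-participation. The logical backbone---push a witness through the refinement, fire very strong $\mathcal{E}$-participation, and combine with strong \sd-participation---is short, which is consistent with the paper's remark that the statement follows from the definitions. The one place that genuinely needs care, and which I would record explicitly as the operative property of a complete refinement of \sd, is that \sd-indifferent lotteries must remain $\mathcal{E}$-indifferent; this is precisely what prevents a strict $\mathcal{E}$-preference from concealing a mere \sd-tie, and it is the step I would guard most carefully rather than wave through.
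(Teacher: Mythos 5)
Your proof is correct and follows essentially the same route as the paper's: the weak clause comes directly from strong \sd-participation, the trigger is fired for very strong $\mathcal{E}$-participation to get $f(\pref)\succ_i^{\mathcal{E}} f(\pref_{-i})$, and the strict \sd-conclusion follows because a strict $\mathcal{E}$-preference is incompatible with \sd-indifference. If anything you are more careful than the paper, which neither transports the \sd-witness to an $\mathcal{E}$-witness explicitly nor justifies why \sd-indifference would force $\mathcal{E}$-indifference.
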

            \begin{proof}
            	An SDS $f$ satisfies \emph{very strong $\mathcal{E}$-participation} if  for all $N\in\mathcal{F}(\mathbb{N})$, $\pref\in\mathcal{R}^N$, and $i \in N$, $f(\pref)\mathrel{\pref_i^{\mathcal{E}}} f(\pref_{-i})$ and 
            		\[
            		  f(\pref)\mathrel{\succ_i^{\mathcal{E}}} f(\pref_{-i}) \text{ whenever }\exists p\in \Delta(A)\colon p\mathrel{\succ_i^{\mathcal{E}}} f(\pref_{-i}).
            		\]
		
            		Assume that $f(\pref_{-i})$ is such that $\exists p\in \Delta(A)\colon p\mathrel{\succ_i^{\mathcal{E}}} f(\pref_{-i})$
            		Now assume that $f$ satisfies very strong $\mathcal{E}$-participation and strong \sd-participation but not very strong \sd-participation. But this means that $f(\pref)\mathrel{\succ_i^{\mathcal{E}}} f(\pref_{-i})$ and $f(\pref)\mathrel{\succsim_i^{\mathcal{\sd}}} f(\pref_{-i})$. 
            Since 	$f(\pref)\mathrel{\succ_i^{\mathcal{E}}} f(\pref_{-i})$, we know that 	$f(\pref)\mathrel{\not\sim_i^{\mathcal{\sd}}} f(\pref_{-i})$. Hence 
             $f(\pref)\mathrel{\succ_i^{\mathcal{\sd}}} f(\pref_{-i})$ which means that $f$ satisfies very strong $\sd$-participation.
            	\end{proof}




            	\section{Social Decision Schemes}
	
            	In this paper, we formally prove participation properties of classic as well as recently introduced SDSs. Before we do that, we give an overview of the SDSs.
	
            	\subsection{Serial Dictatorship and Random Serial Dictatorship}
	
            	 The \emph{serial dictatorship} rule is defined with respect to a permutation $\pi$ over $N$. It starts with the set of all alternatives and then each agent in $\pi$ successively refines the set of alternatives to the set of most preferred alternatives from the remaining set. \rsd returns the serial dictatorship outcome with respect to a permutation that is chosen uniformly at random.
	 
            	 		\subsection{Egalitarian Simultaneous Reservation}


            	 The egalitarian simultaneous reservation (\egal) rule is based on the idea of gradually refining the set of lotteries. We present verbatim the informal description of \egal as presented by \citet{AzSt14a}:

            	 Starting from the entire set $\Delta(A)$, the \egal algorithm proceeds by gradually restricting the set of possible outcomes. The restrictions enforced are lower bounds for the probability of certain equivalence classes while it is always maintained that a lottery exists that satisfies all these lower bounds.
            	 	Each equivalence class $E$ is represented by a \emph{tower} where at any time $t$, the height of this tower's \emph{ceiling} $\guar_t(E)$ represents the lower bound in place for the probability of this subset at that time.
            	 	During the algorithm, agents will climb up these towers and try to push up the ceilings, thereby increasing the lower bounds for certain subsets. 
            	 Each tower starts with the height of its ceiling set to 0. Every agent starts climbing up the tower that corresponds to his most preferred equivalence class. Whenever an agent hits the ceiling, he tries to push it up. He continues climbing, pushing up the ceiling at the same time. Note that the ceiling will only be pushed up as fast as the agent pushing it can climb. Two agents pushing up a ceiling at the same time therefore does not increase the speed of it being pushed up. When it cannot be pushed up any further without compromising the existence of a lottery satisfying all current lower bounds, we say that set $E$ is \emph{tight} and has been \emph{frozen}. 
            	 	At this point, the agent \emph{bounces off the ceiling} and drops back to the floor, moving on to the tower corresponding to his next most preferred equivalence class. 
            	 	 We can think of the algorithm proceeding in stages where a stage ends whenever some agent bounces off the ceiling.
            	 	The algorithm ends when all the equivalence classes have been frozen at which point some lottery satisfying the lower bounds is returned. 


            	\subsection{Maximal Recursive Rule}

	 We now describe the MR (Maximal Recursive) rule as presented by \citet{Aziz13b}.
	 We will denote by	$s^1(a,S,\pref)$ the \emph{generalized plurality score} of $a$ according to $\pref$ when the alternative set and the preference profile is restricted to $S$.  
	 		$$s^1(a,S,\pref)=|\{i\in N\midd a\in \max_{\pref_i}(S)\}|.$$

	 		 \PS relies on the concept of \ims ~(inclusion minimal subsets).
	 		For $S\subseteq A$, let $A_1,\ldots, A_{m'}$ be subsets of $S$.
	 		Let $I(A_1,\ldots, A_{m'})$ be the set of non-empty intersections of elements of some subset of $\{A_1,\ldots, A_{m'}\}$.
	 		\begin{align*}
	 			&I(A_1,\ldots, A_{m'})\\
	 			=&\{X\in 2^A\setminus \emptyset \midd X= \bigcap_{A_j\in A'}A_j
	 			 \textrm{~for some~} A'\subseteq \{A_1,\ldots, A_{m'}\}\}.
	 		\end{align*}


	 		Then, the inclusion minimal subsets of $S\subseteq A$ with respect to $(A_1,\ldots, A_{m'})$ are defined as follows.
	 		\begin{align*}
	 			&\ims(A_1,\ldots, A_{m'})\\
	 			=&\{X\in I(A_1,\ldots, A_{m'})\midd \nexists X'\in I(A_1,\ldots, A_{m'})\\
	 		& \textrm{~s.t.~} X'\subset X\}.
	 		\end{align*}

	 			\PS is defined as Algorithm~\ref{algo:main} that requires as a subroutine Algorithm~\ref{algo:subroutine} that involves calls on subsets of the alternatives. 

				\begin{algorithm}[h!]
				\caption{\PS}
				\label{PS}
				\centering
				\renewcommand{\algorithmicrequire}{\wordbox[l]{\textbf{Input}:}{\textbf{Output}:}} 
				\renewcommand{\algorithmicensure}{\wordbox[l]{\textbf{Output}:}{\textbf{Output}:}}
				\begin{algorithmic}

				\REQUIRE $(A,N,\pref)$
				\end{algorithmic}
				\algsetup{linenodelimiter=\,}
				\begin{algorithmic} 
				\STATE Call $\text{\PS-subroutine}(A,1,(A,N,\pref))$ to compute $p(a)$ for each $a\in A$.
				\RETURN $[a_1:p(a_1),\ldots, a_m: p(a_m)]$
				\end{algorithmic}
				\label{algo:main}
				\end{algorithm}

 			\begin{algorithm}[t!]
 			  \caption{\PS-subroutine}
 			  \label{PS}
 			\small
 			\renewcommand{\algorithmicrequire}{\wordbox[l]{\textbf{Input}:}{\textbf{Output}:}} 
 			 \renewcommand{\algorithmicensure}{\wordbox[l]{\textbf{Output}:}{\textbf{Output}:}}
 			\begin{algorithmic}
				\small
 				\REQUIRE $(S,v,(A,N,\pref))$
 			\end{algorithmic}
 			\algsetup{linenodelimiter=\,}
 			  \begin{algorithmic}[1] 
				  \small

 				\IF{$\max_{\pref_i}(S)=S$ for all $i\in N$}
 				\STATE $p(a)=v/|S|$ for all $a\in S$
 				\ELSE

 		\STATE $T(i,S,\pref)\longleftarrow\{a\midd a\in \arg \max_{a\in \max_{\pref_i}(S)} s^1(a,S,\pref)\}$ for all $i\in N$\\ 
 			\STATE $t(i,a,\pref)\longleftarrow1/|T(i,S,\pref)|$ if $a\in T(i,S,\pref)$ \& zero otherwise for all $i\in N$ \& $a\in S$  
 		\STATE $\gamma(a)\longleftarrow\sum_{i\in N}t(i,a,\pref)$ for all $a\in S$
 		\STATE $p(a)\longleftarrow v(\gamma(a,\pref))/|N|$ for all $a\in S$
 		\STATE $\{S_1,\ldots, S_k\}\longleftarrow \ims(\max_{\pref_1}(S),\ldots, \max_{\pref_n}(S))$
 		\FOR{each $S_j\in \{S_1,\ldots, S_k\}$}
 		\RETURN $\text{\PS-subroutine}(S_j,p(S_j), (A, N, \pref))$ 
 		\ENDFOR
 		\ENDIF  
 			 \end{algorithmic}
 			\label{algo:subroutine}
 			\end{algorithm}

							\begin{figure}
									\scalebox{0.85}{
								\centering
							\begin{tikzpicture}[level/.style={sibling distance=40mm/#1}]


							    \node { \{a,b,c,d,e\}}
							        child { node {\{a (5/18), b (5/18)\}} 
							          child {node {\{a (10/18)\}}}
							}
							        child {
							            node {\{c (8/18)\}}
							        };

						\draw (5,0) node () [label=left:$\text{Depth~} 0$] {};
							        \draw (5,-1.5) node () [label=left:$\text{Depth~} 1$] {};
							  \draw (5,-3) node () [label=left:$\text{Depth~} 2$] {};

							\end{tikzpicture}
						}
						\centering
							\begin{align*}
								1:\quad&\{a,b,c,d\}, \{e\}&	2:\quad&\{a,b\}, \{c,d\}, \{e\}\\
								3:\quad&\{c,e\},\{a\},\{d\},\{b\}
							\end{align*}
						\normalsize
							\caption{Recursion tree corresponding to running \PS on the preference profile above. The lottery returned is $[a: 10/18, b: 0, c: 8/18]$ when all agents vote. The outcome is $[c: 1]$ when $2$ does not vote. Hence agent $2$ gets an \sd more preferred outcome when he participates.
							}
		\label{fig:mrr}
							\end{figure}

            		\section{Participation Incentives in Social Decision Schemes}

            	Note that \rsd satisfies very strong \sd-participation~\citep{BBH15b} which implies both strong \sd-participation as well as very strong \dl-participation. However \rsd takes exponential time. We first observe that there is a simple rule that satisfies strong \sd-participation.

            	\begin{remark}
            	 The constant rule that gives the same probability to each alternative satisfies strong \sd-participation.
            	\end{remark}
	
            Since the constant rule is highly inefficient for any reasonable notion of efficiency, the remark shows that satisfying strong \sd-participation only becomes challenging when the goal is to additionally satisfy properties like ex post efficiency or \sd-efficiency.

            We first observe that $BO$ does not satisfy very strong $\dl$-participation. The reason is that an agent voting may not change the set of Borda winners which means that the resultant lottery does not change as well. 

            	%
            	%

            	Next, we show that there is a simple linear-time rule that achieves very strong \dl-strategyproofness. 
            	Consider the following rule which we refer to as \pp (Proportional Plurality).
            	Each agent has a total of 1 point which he uniformly distribute among the alternatives in his first equivalence class. Each alternative then gets probability that is equal the total amount of points that it gets divided by $n$. 

            	\begin{theorem}
            		\pp satisfies very strong \dl-participation.
            		\end{theorem}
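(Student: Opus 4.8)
The plan is to work directly from the closed-form description of \pp rather than from any iterative procedure. Writing $E_i^1$ for agent $i$'s most preferred equivalence class, each agent $j$ spreads $1/|E_j^1|$ points over every alternative in $E_j^1$, so that
\[
\pp(\pref)(a)=\frac1n\sum_{j\in N\colon a\in E_j^1}\frac{1}{|E_j^1|},\qquad n=|N|.
\]
The feature I would exploit is that agent $i$ spends his entire single point inside $E_i^1$ and nothing outside it. Hence, letting $P:=\sum_{j\in N\setminus\{i\}}\sum_{a\in E_i^1\cap E_j^1}\frac{1}{|E_j^1|}$ denote the total point mass the other $n-1$ agents place on $E_i^1$, a direct computation gives $\pp(\pref)(E_i^1)=\frac{P+1}{n}$ and $\pp(\pref_{-i})(E_i^1)=\frac{P}{n-1}$.

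The second step is the elementary comparison $\frac{P+1}{n}\ge\frac{P}{n-1}$, which after cross-multiplying reduces to $P\le n-1$; this always holds because the $n-1$ remaining agents contribute exactly $n-1$ points in total. So $\pp(\pref)(E_i^1)\ge\pp(\pref_{-i})(E_i^1)$, and since the \dl relation first compares the probability of $E_i^1$, participating never strictly hurts $i$. This already delivers the weak-inequality half of very strong \dl-participation. Moreover equality holds precisely when $P=n-1$, i.e.\ when every other agent's top class lies inside $E_i^1$; in that case both formulas evaluate to $1$, so both lotteries place all their mass on $E_i^1$, agree on every class $E_i^l$, and are therefore \dl-indifferent for $i$.

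For the strictness clause I would characterize when improvement is even possible. A lottery is \dl-maximal for agent $i$ exactly when it assigns probability $1$ to $E_i^1$, so there exists $p$ with $p\succ_i^{\dl}\pp(\pref_{-i})$ if and only if $\pp(\pref_{-i})(E_i^1)<1$. Combining this with the formulas above, $\pp(\pref_{-i})(E_i^1)=\frac{P}{n-1}<1$ is equivalent to $P<n-1$, which is exactly the regime where the inequality of the previous step is strict, yielding $\pp(\pref)(E_i^1)>\pp(\pref_{-i})(E_i^1)$ and hence $\pp(\pref)\succ_i^{\dl}\pp(\pref_{-i})$. Thus room for improvement exists if and only if participation strictly improves $i$'s outcome, which is precisely the very strong condition.

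The argument is essentially a one-line algebraic comparison fused with the observation that \dl inspects the top class first, so I do not anticipate a deep obstacle. The step demanding the most care is the bookkeeping in the equality case: one must verify that equality of the $E_i^1$-masses forces \emph{both} lotteries to be supported entirely on $E_i^1$, so that they agree on all later classes and no strict \dl-comparison can survive. It is exactly this fact that makes the ``weak inequality'' case and the ``no room for improvement'' case coincide, closing the proof. A minor loose end is the degenerate electorate $N=\{i\}$, where $\pp(\pref_{-i})$ is evaluated on the empty profile and must be treated by the standing convention for the empty electorate.
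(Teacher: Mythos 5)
Your proof is correct and follows essentially the same route as the paper's: both arguments compare the probability mass that $\pp(\pref)$ and $\pp(\pref_{-i})$ place on agent $i$'s top equivalence class $E_i^1$ and invoke the fact that $\dl$ inspects that class first. You are in fact more careful than the paper, whose proof asserts an unconditional strict increase and silently passes over the boundary case $P=n-1$ (equivalently, every other agent's top class contained in $E_i^1$), where $\frac{P+1}{n}=\frac{P}{n-1}=1$; your observation that in exactly this case $\pp(\pref_{-i})$ already puts all mass on $E_i^1$, so no $\dl$-improvement exists and the strictness clause is vacuous, closes a small gap in the published argument.
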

            		\begin{proof}
            	Let us compare $\pp(\pref_{-i})$ with $\pp(\pref)$. In $\pp(\pref)$, the points of all alternative not in $\max_{i}(A)$ stay the same whereas the points of alternatives in $\max_{i}(A)$ increase because of the extra points allocated due to the presence of agent $i$. Hence the probability weight of agent $i$'s first equivalence class is strictly more in $\pp(\pref)$ in contrast to $\pp(\pref_{-i})$.		
            						\end{proof}
				
            In \pp, we have a natural rule that is anonymous, neutral and satisfies very strong \dl-participation.
            					Note however that \pp is not ex post efficient.
            					This raises the question whether there is any SDS that satisfies ex post efficiency as well as \dl-participation. 
            					 Next, we show that the \emph{Maximal Recursive (MR)} that is known to be ex post efficient~\citep{Aziz13b} also satisfies very strong \dl-participation.	

							 		Note that \rsd satisfies very strong \sd-participation~\citep{BBH15b} which implies \sd-participation. 
							 		However, it does not satisfy \sd-efficiency~\citep{ABBH12a} and it takes exponential time. 

							 	\begin{theorem}\label{th:mr-strong-sd}
							 									$MR$ satisfies very strong \sd-participation.
							 									\end{theorem}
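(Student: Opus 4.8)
The plan is to reduce very strong \sd-participation for $MR$ to two quantitative statements about agent $i$'s \emph{upper contour sets} $U_k := E_i^1\cup\cdots\cup E_i^k$ and then to establish these from the recursive structure of $MR$. Observe first that the threshold sets $\{x : x \pref_i y\}$ in the definition of \sd are precisely the $U_k$, so $MR(\pref)\mathrel{\pref_i^{\sd}} MR(\pref_{-i})$ is equivalent to $MR(\pref)(U_k)\ge MR(\pref_{-i})(U_k)$ for every $k$. Next, using only the definition of \sd, a lottery $p$ with $p\mathrel{\spref_i^{\sd}} MR(\pref_{-i})$ exists iff $MR(\pref_{-i})$ is not \sd-maximal for $i$, and the \sd-maximal lotteries for $i$ are exactly those supported on $E_i^1$; hence such a $p$ exists iff $MR(\pref_{-i})(E_i^1)<1$. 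So it suffices to prove: (a) $MR(\pref)(U_k)\ge MR(\pref_{-i})(U_k)$ for all $k$, and (b) the $k=1$ inequality is strict whenever $MR(\pref_{-i})(E_i^1)<1$.

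Second, I would record three structural facts that make the ``with $i$'' side transparent. Because the sets $S_1,\dots,S_k$ returned by \ims are inclusion-minimal intersections of the $\max_{\pref_j}(S)$, any two distinct ones are disjoint, and each is either contained in or disjoint from $\max_{\pref_j}(S)$ for every agent $j\in N$; moreover the subroutine only redistributes the pot $p(S_\ell)$ \emph{inside} $S_\ell$, so probability mass never crosses the boundary of a present agent's top class during the recursion. Applying this to agent $i$ at the root (where $\max_{\pref_i}(A)=E_i^1$) gives the closed form
\[
  MR(\pref)(E_i^1)=\frac{1}{|N|}\sum_{j\in N} t(j,E_i^1),\qquad t(j,E_i^1):=\sum_{a\in E_i^1} t(j,a,\pref),
\]
and since $T(i,A,\pref)\subseteq E_i^1$ we have $t(i,E_i^1)=1$, whence $MR(\pref)(E_i^1)=\tfrac{1}{|N|}\bigl(1+\sum_{j\ne i}t(j,E_i^1)\bigr)$. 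The same bookkeeping yields analogous identities for the mass entering each top-level \ims set.

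Third, the engine driving the comparison is that deleting $i$ perturbs the generalized plurality scores in one uniform way: for any fixed $S$, $s^1(a,S,\pref)=s^1(a,S,\pref_{-i})+\mathbf{1}[a\in\max_{\pref_i}(S)]$. Thus $i$'s presence raises the score of exactly the alternatives in its current top class, by one each, which can only help those alternatives win the $\arg\max$ tie-break inside any straddling top class. Consequently, for every $j\ne i$ the contribution $t(j,\cdot)$ to $i$'s preferred side is weakly larger under $\pref$ than under $\pref_{-i}$; together with the closed form this gives $MR(\pref)(E_i^1)\ge 1/|N|$ and is the local mechanism by which adding $i$ shifts weight toward $U_1,U_2,\dots$.

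The main work, and the step I expect to be the chief obstacle, is to globalize these local observations into (a) and (b). The difficulty is that the recursion trees for $\pref$ and $\pref_{-i}$ need not be isomorphic: lowering the scores on $E_i^1$ can change the tie-break sets $T(j,\cdot)$ of the remaining agents and hence alter the \emph{entire} \ims decomposition, so one cannot simply compare the two trees node by node. I would address this by an induction on the recursion that maintains a coupling between the two runs, using the score-shift monotonicity to argue that at every stage the ``without $i$'' decomposition is only a shift of the ``with $i$'' decomposition \emph{away from} agent $i$'s preferred side, so that no step moves more mass outside $U_k$ than in the ``with $i$'' run. For the strictness in (b), I would exploit that at the root agent $i$ injects a full $1/|N|$ unit of mass into $E_i^1$ which, by the conservation fact, never leaves $E_i^1$; quantitatively reconciling this injected mass with the recursion-driven inflow into $E_i^1$ on the ``without $i$'' side is exactly where the coupling must be made precise, and is the crux of the argument.
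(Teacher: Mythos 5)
Your reduction is sound and your setup matches the paper's strategy: the paper likewise argues in two stages, first that participation strictly increases the mass on $E_i^1$ whenever that mass is below one (your claim (b), which the paper derives from the same observation that $i$'s presence can only add points to $\max_{\pref_i}(A)$ and never to its complement), and second that strong \sd-participation follows by comparing the recursion tree $T$ (with $i$) against $T_{-i}$ (without $i$) and arguing that at each node $S$ any reallocation of weight among the \ims children moves mass from children inside $S\setminus\max_{\pref_i}(S)$ to children inside $\max_{\pref_i}(S)$, which is an \sd-improving transfer; this is then ``applied inductively down the tree.'' Your preliminary observations --- that distinct \ims sets are disjoint, that each is contained in or disjoint from every agent's top class, that mass never crosses such a boundary once assigned, and that $s^1(a,S,\pref)=s^1(a,S,\pref_{-i})+\mathbf{1}[a\in\max_{\pref_i}(S)]$ --- are all correct and in fact make the local step more precise than the paper does.

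The genuine gap is that you announce the decisive step rather than prove it. Establishing $MR(\pref)(U_k)\ge MR(\pref_{-i})(U_k)$ for all $k$ requires exactly the global induction you defer: the two recursion trees can have different \ims decompositions below the first node at which they diverge, so the local fact ``each $j\ne i$ shifts weight toward $\max_{\pref_i}(S)$ at a fixed $S$'' does not by itself control what happens to the upper contour sets $U_k$ for $k\ge 2$ deeper in the tree, where the relevant comparison is no longer against $E_i^1$ but against $\max_{\pref_i}(S)$ for subsets $S$ that may not even coincide across the two runs. You correctly identify this as the crux and propose a coupling, but you do not construct it, state its invariant, or verify that the invariant is preserved when the \ims partition itself changes; as written the proposal is therefore an outline whose central lemma is missing. (To be fair, the paper's own proof disposes of this step with the single sentence that ``the same argument can be applied inductively down the tree,'' so you have located precisely the point where the published argument is also at its thinnest --- but flagging the difficulty is not the same as resolving it, and your proof is incomplete until the coupling invariant is written down and checked.)
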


							 									\begin{proof}

							 We first prove that by participating, an agent increases the probability of his first equivalence class if it is not already one. Let us compare $MR(\pref_{-i})$ with $MR(\pref)$. In $MR(\pref)$, the points of all alternative not in $\max_{i}(A)$ stay the same whereas the points of alternatives in $\max_{i}(A)$ increase because of the extra points allocated due to the presence of agent $i$. Hence the probability weight of agent $i$'s first equivalence class is strictly more in $MR(\pref)$ in contrast to $MR(\pref_{-i})$.	
		
							 		We now show that $MR$ satisfies strong SD-participation. In view of the our first claim, it will follow that $MR$ satisfies very strong SD-participation. 								We first note that when $i$ votes, he always contributes his score to alternatives in a set $S$ to $\max_{\pref_i}(S)$. When $i$ votes, other agents may also contribute to alternatives in $\max_{\pref_i}(S)$ if they are their most preferred alternatives as well. 
									
							 									Let us consider the recursion tree $T_{-i}$ of $MR$ when $i$ does not vote. Any given node denote by $S$ corresponds to the  set of alternatives $S$ in the recursion tree when $i$ does not vote, for which a total probability weight $v$ needs to be distributed among the children of the node that corresponds to inclusion minimal subsets of $S$ with respect to $(\max_{\pref_1}(S),\ldots, \max_{\pref_n}(S))$.

							 	Let us compare $T_{-i}$ with the recursion tree $T$ of $MR$ when $i$ does vote. Let $S$ be the node nearest to the root in $T_{-i}$ that has a different weight distribution among its children than in $T$. 
							 	Let us compare the children nodes and their probability with the case when $i$ does vote and the children nodes are $S_1', S_2'\ldots$. We show that when $i$ votes, in $S$, any reallocation of probability weight is to $i$'s most preferred alternatives in $S$ which is consistent with an \sd-improving probability transfer for $i$.

							When agents give points to the different alternatives in $S$, then each alternative in $S\setminus \max_{\pref_i}(S)$ get at most the same number of points as when $i$ does not vote. As for alternative in $\max_{\pref_i}(S)$, at least one of them gets strictly more points than before. Therefore when we look at the inclusion minimal subsets in $T$ and $T_{-i}$, any decrease in the probability weight of some $S_j\in S\setminus \max_{\pref_i}(S)$ corresponds to an increase in the probability weight of some $S_{\ell}\subset S \cap \max_{\pref_i}(S)$.

					%
					%
					%
					%
					%

							 	There is a shift in probability weight of the child nodes and each shift in probability weight of the child nodes corresponds to an increase in the probability weight of inclusion minimal subsets that are subsets of $\max_{\pref_i}(S)$ which implies an \sd-improvement. 
							 	The same argument can be applied inductively down the tree which proves that  $MR$ satisfies strong \sd-participation.
							 													\end{proof}

							 		In the shape of $MR$, we have an anonymous, neutral, weak \sd-strategyproof, and ex post efficient SDS that satisfies very strong \sd-participation.
							 	However, $MR$ is not \sd-efficient.  Next, we consider an \sd-efficient rule (\egal) already in the literature~\citep{AzSt14a} and show that it satisfies strong \sd-participation. 

						\egal does not satisfy very strong \sd-participation.

							\begin{theorem}
								\egal does not satisfy very strong \sd-participation.
								\end{theorem}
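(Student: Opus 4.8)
The plan is to exhibit a single counterexample profile where \egal fails the strict-improvement clause of very strong \sd-participation. By definition, to show \egal does not satisfy very strong \sd-participation, I must produce a finite agent set $N$, a profile $\pref$, and an agent $i\in N$ such that either $\egal(\pref)\not\pref_i^{\sd}\egal(\pref_{-i})$, or else there exists a lottery $p$ strictly \sd-better than $\egal(\pref_{-i})$ for $i$ while $\egal(\pref)\mathrel{\not\succ_i^{\sd}}\egal(\pref_{-i})$. Since \egal is \sd-efficient and is known to behave reasonably, the weak clause $\egal(\pref)\pref_i^{\sd}\egal(\pref_{-i})$ is unlikely to fail outright; so I expect the counterexample to target the \emph{strict} requirement. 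The cleanest way to do this is to find a profile where agent $i$'s participation leaves the outcome \sd-indifferent for $i$ (ideally literally unchanged), yet $\egal(\pref_{-i})$ is not already \sd-maximal for $i$, i.e.\ some $p\in\Delta(A)$ satisfies $p\mathrel{\succ_i^{\sd}}\egal(\pref_{-i})$.

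First I would pick a small profile (two or three alternatives, two or three agents) in which, when $i$ abstains, the egalitarian reservation procedure freezes the towers in such a way that $i$'s most preferred equivalence class already receives a probability strictly below what $i$ could in principle obtain from some lottery. Then I would check that adding agent $i$ back does not raise the ceiling on $i$'s top class: because \egal equalizes pushing speeds and freezes a class as soon as no lottery can satisfy a higher lower bound, it is plausible to arrange that the binding constraint comes from the \emph{other} agents' demands, so that $i$'s extra presence on his top tower is absorbed without changing the frozen heights. The second step is to run \egal by hand on both $\pref$ and $\pref_{-i}$, tracking the ceilings $\guar_t(E)$ stage by stage until every class is frozen, and to record the two output lotteries explicitly.

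The third step is to verify the two numerical facts that make this a genuine counterexample: (a) that $\egal(\pref)$ and $\egal(\pref_{-i})$ are \sd-indifferent for $i$ (cleanest if they are identical), so that $\egal(\pref)\mathrel{\not\succ_i^{\sd}}\egal(\pref_{-i})$; and (b) that there nonetheless exists $p\in\Delta(A)$ with $p\mathrel{\succ_i^{\sd}}\egal(\pref_{-i})$, which I would certify by writing down an explicit $p$ shifting probability mass onto a strictly more preferred alternative of $i$ and checking the defining \sd-inequalities $\sum_{x\pref_i y}p(x)\ge\sum_{x\pref_i y}\egal(\pref_{-i})(x)$ with at least one strict. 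Together (a) and (b) contradict the strict clause of very strong \sd-participation.

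The main obstacle is the first step: because the towers are pushed up simultaneously and a class freezes exactly when the feasibility of \emph{some} lottery would be violated, the frozen heights are determined by a global balance among all agents, so it is genuinely delicate to engineer a profile where $i$'s participation fails to move the ceiling on his own top class while still leaving room (via some other $p$) for strict improvement. I would manage this by exploiting the \emph{egalitarian} nature of the rule: arrange the other agents so that they collectively pin down the probabilities of the classes $i$ cares about through a tight feasibility constraint that does not slacken when $i$ is added. Concretely, a profile in which $i$'s favorite alternative is also highly contested (or conversely, in which the complementary probability is forced by the opponents' reservations) should let the outcome stay fixed under $i$'s participation while a more $i$-favorable lottery remains \sd-feasible, thereby witnessing the failure.
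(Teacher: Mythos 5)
Your strategy is exactly the one the paper uses: exhibit a profile where an agent's participation leaves the \egal outcome unchanged even though some lottery is strictly \sd-better for that agent than the outcome under abstention, thereby violating the strict clause of very strong \sd-participation. However, your proposal never actually produces the witness profile or the two computed lotteries --- it only describes how you \emph{would} search for one --- and since the entire content of this proof is the concrete counterexample, that is a genuine gap as written. Moreover, your closing worry that engineering such a profile is ``genuinely delicate'' is misplaced: symmetry does all the work. The paper takes two alternatives and four agents, with agents $1,2$ reporting $a\spref b$ and agents $3,4$ reporting $b\spref a$. Without agent $4$, the towers for $\{a\}$ and $\{b\}$ are pushed up at the same unit rate and both freeze at $\tfrac12$, so $\egal(\pref_{-4})=\tfrac12 a+\tfrac12 b$; with agent $4$ present the situation is symmetric and the outcome is again $\tfrac12 a+\tfrac12 b$. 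Since the degenerate lottery on $b$ strictly \sd-dominates $\tfrac12 a+\tfrac12 b$ for agent $4$, the strict-improvement clause fails. If you supply such an explicit profile and the two (identical) output lotteries, your argument becomes complete and coincides with the paper's.
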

								\begin{proof}
									Consider the profile:
									\begin{align*}
										1&: a, b&
										2&: a,b&
										3&: b,a&
										4&: b,a
										\end{align*}
										Then $ESR({\pref_{-4}})=\frac{1}{2}a+\frac{1}{2}b$ and $ESR({\pref})=\frac{1}{2}a+\frac{1}{2}b$ as well.
			
												\end{proof}

												In fact,  we next show that \egal does not even satisfy strong \sd-participation.
							
												\begin{theorem}
								\egal does not satisfy strong \sd-participation				\end{theorem}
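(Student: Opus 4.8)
The plan is to refute strong \sd-participation by a single counterexample: I will exhibit a profile $\pref$ and an agent $i$ for which $ESR(\pref)\mathrel{\not\pref_i^{\sd}} ESR(\pref_{-i})$. Strong \sd-participation demands $ESR(\pref)\mathrel{\pref_i^{\sd}} ESR(\pref_{-i})$ for \emph{every} profile and agent, so one violating instance is enough. Note first that the profile used in the preceding theorem does not serve here: there $ESR(\pref)$ and $ESR(\pref_{-i})$ coincided, so participation was neither a strict gain nor a loss, which only breaks \emph{very strong} participation. To break \emph{strong} participation I must produce an instance in which the two lotteries genuinely differ and in which participation is actually \sd-harmful. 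Concretely, it suffices to find a single upper contour set $U=\{x\in A \midd x\pref_i y\}$ of agent $i$ with $ESR(\pref)(U) < ESR(\pref_{-i})(U)$: because $\pref_i^{\sd}$ requires all cumulative inequalities to point the same way, one reversed inequality already certifies $ESR(\pref)\mathrel{\not\pref_i^{\sd}} ESR(\pref_{-i})$.

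To build such an instance I would take a small profile with at least three alternatives and run the \egal reservation dynamics on both $\pref$ and $\pref_{-i}$, tracking for each equivalence-class tower the height at which its ceiling freezes and how the probability mass that remains free inside a tied class is allocated once every tower is frozen. Having computed the two lotteries explicitly, I would read off the cumulative probabilities along agent $i$'s preference order and point to the threshold $y$ at which this cumulative mass strictly drops when $i$ joins. The design goal for the profile is that agent $i$'s entry reroutes probability away from some middle-ranked alternative of $i$ and onto an alternative that $i$ ranks strictly below it, so that even though $i$'s top class may gain, a lower upper contour set $U$ of $i$ loses mass.

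The main obstacle is engineering and verifying exactly this rerouting, since it runs against the naive intuition that an extra voter only helps the alternatives it supports; the effect is available only because of the egalitarian simultaneity of \egal. Adding agent $i$ changes \emph{which} equivalence-class sets become tight and in \emph{what order} their ceilings freeze, and it is this altered freezing schedule, together with the way the induced lower bounds force the split inside tied classes, that can push probability onto an alternative $i$ disprefers. The delicate part of the argument is therefore the stage-by-stage bookkeeping of the reservation process on the two profiles and the confirmation that the drop at the chosen threshold is strict rather than an equality. Finally, since a failure of strong \sd-participation also entails a failure of very strong \sd-participation, the result is consistent with the preceding theorem and sharpens it.
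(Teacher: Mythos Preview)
Your strategy is exactly the paper's: exhibit a concrete profile $\pref$ and an agent $i$ for which some upper contour set of $i$ loses mass under \egal when $i$ joins. However, what you have written is a plan, not a proof. The entire substantive content of the argument---the specific profile, the two \egal lotteries, and the verification that a cumulative inequality is reversed---is absent. For a result that is proved purely by counterexample, omitting the counterexample is omitting the proof.

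Two further cautions. First, your hope of getting by with ``a small profile with at least three alternatives'' may be too optimistic: the paper's witness uses six agents and eight alternatives and was located by computer search, which suggests that small hand-built instances do not easily exhibit the effect. Second, your informal picture of the mechanism (``reroutes probability away from some middle-ranked alternative of $i$ and onto an alternative that $i$ ranks strictly below it'') is roughly right, but in the paper's instance the diverted mass lands on $i$'s \emph{bottom} class while $i$'s top class simultaneously gains; the failure shows up only at the intermediate cumulative. So when you do construct an example, you should expect to have to track all of $i$'s cumulative levels, not just one, before you can certify the violation.
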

								\begin{proof}
									Consider the following preference profile.\footnote{The example was identified by Pang Luo using  \url{https://www-m9.ma.tum.de/games/sr-applet/index_en.html}.} 
									\begin{align*}
					         	 1:& \{b,c,f\},\{a,d,e,g,h\}\\
					         	 2:& \{a,h\},\{c,d,e,f,g\},\{b\}\\
					            3:& \{b,c,d,e,h\},\{a,f,g\}\\
					            4:& \{a,d\},\{b,c,g\},\{e\},\{f,h\}\\
					           5: &\{a,d,e,f,h\},\{b,g\},\{c\}\\
					            6:& \{e,h\},\{a,c,f\},\{b,d,g\}
					\end{align*}

					The \egal outcome is following lottery:
			
					$a:0.333333, b:0.166667, c:0.166667, d:0.000000, e:0.000000, f:0.000000, g:0.000000, h:0.333333$.

					The \egal outcome when 2 abstains is
					$a:0.222222, b:0.111111, c:0.222222, d:0.111111, e:0.333333, f:0.000000, g:0.000000, h:0.000000$.

					Note that probability of $b$ is 1/6 in the original lottery but 
					1/9 when 2 abstains. Therefore when 2 votes, the outcomes does not \sd-dominate the outcome when 2 abstains. 
									\end{proof}

            					 						\begin{remark}
            					For a complete refinement $\mathcal{E}$ of \sd,	if a rule is $\mathcal{E}$-efficient, it does not imply that it satisfies very strong $\mathcal{E}$-participation. 							For example, $ESR$ is \dl-efficient but does not satisfy very strong \dl-participation.	
            							\end{remark}

            	%
            	%

            							\citet{BBH15b} proved that RSD satisfies very strong \sd-participation. On the other hand, serial dictatorship that is an \sd-strategyproof and \dl-efficient rule does not even satisfy very strong \dl-participation or very strong \sd-participation.
						
            										\begin{theorem}
            							Serial dictatorship does not satisfy very strong \dl-participation or very strong \sd-participation.
            											\end{theorem}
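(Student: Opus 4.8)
The plan is to refute the ``strict improvement'' clause of very strong participation by exhibiting a profile on which some agent's participation cannot move the outcome even though a strictly better lottery for that agent exists. The structural fact I would exploit is that in serial dictatorship, once an earlier dictator with strict preferences has refined the alternative set down to a single alternative, every later agent in the permutation is powerless: the refinement of a singleton is that same singleton. An agent who is not first in $\pi$ and whose predecessor already pins down the outcome therefore faces the same lottery whether or not they participate.

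Concretely, I would take $A=\{a,b\}$ and two agents ordered $1,2$ by $\pi$, with $a\spref_1 b$ and $b\spref_2 a$. When agent $2$ abstains, serial dictatorship runs on agent $1$ alone and returns $a$ with probability $1$; when agent $2$ participates, agent $1$ (going first) already restricts the set to $\{a\}$, and agent $2$'s refinement of $\{a\}$ is again $\{a\}$, so the outcome is once more $a$ with probability $1$. Note that the final set is a singleton in both runs, so no tie-breaking convention for the returned lottery is needed. Hence $f(\pref)=f(\pref_{-2})$, and the weak requirement $f(\pref)\pref_2^{\mathcal{E}} f(\pref_{-2})$ is met only with equality.

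The contradiction then comes straight from the ``whenever'' clause. Since $b\spref_2 a$, the degenerate lottery $p=b$ satisfies $p\spref_2^{\sd} f(\pref_{-2})$, so a strictly \sd-preferred (hence, as \dl refines \sd, also strictly \dl-preferred) alternative to $f(\pref_{-2})$ genuinely exists. Very strong \sd-participation would thus demand $f(\pref)\spref_2^{\sd} f(\pref_{-2})$, contradicting $f(\pref)=f(\pref_{-2})$; replacing $\spref_2^{\sd}$ by $\spref_2^{\dl}$ throughout gives the same contradiction for very strong \dl-participation.

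The only subtlety worth flagging---and the step I would phrase with care---is the reading of ``serial dictatorship'' as a family indexed by permutations: the negative claim is meant to hold for every fixed $\pi$, and I would observe that relabelling agents so that the first two positions of any given $\pi$ are occupied by the two agents above reduces the general case to the displayed instance, so fixing $\pi=(1,2,\dots)$ costs no generality.
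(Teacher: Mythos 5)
Your proposal is correct and follows essentially the same strategy as the paper: exhibit a profile where an agent late in the permutation faces an already-singleton choice set, so participation cannot change the outcome even though a strictly $\sd$- (hence $\dl$-) preferred lottery exists. The paper's instance uses three agents and a tie in the first dictator's preferences, but your smaller two-agent, two-alternative example establishes the same point by the same mechanism.
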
		
            												\begin{proof}
												
            												Consider the following preference profile. 
            												\begin{align*}
            													1&: \{a, b\},c\\
            													2&: c,b,a\\
            													3&: c,b,a
            													\end{align*}
            													Consider serial dictatorship with respect to permutation $123$. Then, the outcome for profile $(\pref_1,\pref_2)$ is $1b$. The outcome remains the same for profile $(\pref_1,\pref_2,\pref_3)$.
            													\end{proof}

            						The theorem above leads to the following observations.
						
            							\begin{remark}
            													If a rule satisfies $\dl$-strategyproofness, it does not imply that it satisfies very strong \dl-participation.					Serial dictatorship is \dl-strategyproof and in fact even \sd-strategyproof. However, we have shown that it does not satisfy very strong \dl-participation.
            								\end{remark}
								
            								\begin{remark}
            														If a rule satisfies $\dl$-efficiency and hence $\sd$-efficiency, it does not imply that it satisfies very strong \dl-participation or very strong \sd-participation. For example, serial dictatorship	satisfies the efficiency properties but not the participation properties.
            									\end{remark}
						
            			%

            			\section{Conclusions}
			
            			In this paper, we continued the line of research concerning strategic aspects in randomized social choice~\citep{Aziz13b,ABB13d, ABBH12a,BBH15b}. In particular, we expanded the taxonomy of participation notions by relating participation with respect to stochastic dominance and participation with respect to refinements of stochastic dominance. We also proved  \PS satisfies very strong-participation but \egal does not satisfy strong \sd-participation. 
				Other than identifying attractive SDSs with good efficiency and participation properties, our study further clarifies the the extent to which efficiency, polynomial-time computability, and participation are compatible.  
			
            			\citet{BBH15b} posed an interesting open problem whether there exists a social decision scheme that satisfies \sd-efficiency and very strong \sd-efficiency. We conclude with a similar open problem: does there exist an SDS that satisfies \dl-efficiency and very strong \dl-participation?

                        \section*{Acknowledgments}

		   The author thanks Felix Brandt, Pang Luo and Christine Rizkallah for discussion on the topic. The author also thanks Nicholas Mattei for useful feedback.

\begin{thebibliography}{20}
\expandafter\ifx\csname natexlab\endcsname\relax\def\natexlab#1{#1}\fi
\expandafter\ifx\csname url\endcsname\relax
  \def\url#1{\texttt{#1}}\fi
\expandafter\ifx\csname urlprefix\endcsname\relax\def\urlprefix{URL }\fi

\bibitem[{Abdulkadiro\u{g}lu and S{\"o}nmez(2003)}]{AbSo03a}
Abdulkadiro\u{g}lu, A., S{\"o}nmez, T., 2003. Ordinal efficiency and dominated
  sets of assignments. Journal of Economic Theory 112~(1), 157--172.

\bibitem[{Aziz(2013)}]{Aziz13b}
Aziz, H., 2013. {Maximal Recursive Rule: A New Social Decision Scheme}. In:
  Proc.~of 22nd IJCAI. AAAI Press, pp. 34--40.

\bibitem[{Aziz et~al.(2014)Aziz, Brandl, and Brandt}]{ABB13d}
Aziz, H., Brandl, F., Brandt, F., 2014. On the incompatibility of efficiency
  and strategyproofness in randomized social choice. In: Proc.~of 28th AAAI
  Conference. AAAI Press, pp. 545--551.

\bibitem[{Aziz et~al.(2015)Aziz, Brandl, and Brandt}]{ABB14b}
Aziz, H., Brandl, F., Brandt, F., 2015. Universal dominance and welfare for
  plausible utility functions. Journal of Mathematical Economics 60, 123--133.

\bibitem[{Aziz et~al.(2013{\natexlab{a}})Aziz, Brandt, and Brill}]{ABB13b}
Aziz, H., Brandt, F., Brill, M., 2013{\natexlab{a}}. The computational
  complexity of random serial dictatorship. Economics Letters 121~(3),
  341--345.

\bibitem[{Aziz et~al.(2013{\natexlab{b}})Aziz, Brandt, and Brill}]{ABBH12a}
Aziz, H., Brandt, F., Brill, M., 2013{\natexlab{b}}. On the tradeoff between
  economic efficiency and strategyproofness in randomized social choice. In:
  Proc.~of 12th AAMAS Conference. IFAAMAS, pp. 455--462.

\bibitem[{Aziz and Stursberg(2014)}]{AzSt14a}
Aziz, H., Stursberg, P., 2014. A generalization of probabilistic serial to
  randomized social choice. In: Proc.~of 28th AAAI Conference. AAAI Press, pp.
  559--565.

\bibitem[{Bogomolnaia(2015)}]{Bogo15a}
Bogomolnaia, A., 2015. The most ordinally-egalitarian of random voting rules.
  In: The proceedings of the 3rd International Workshop on Matching Under
  Preferences (MATCHUP).

\bibitem[{Bogomolnaia and Moulin(2001)}]{BoMo01a}
Bogomolnaia, A., Moulin, H., 2001. A new solution to the random assignment
  problem. Journal of Economic Theory 100~(2), 295--328.

\bibitem[{Bogomolnaia et~al.(2005)Bogomolnaia, Moulin, and Stong}]{BMS05a}
Bogomolnaia, A., Moulin, H., Stong, R., 2005. Collective choice under
  dichotomous preferences. Journal of Economic Theory 122~(2), 165--184.

\bibitem[{Brandl et~al.(2015{\natexlab{a}})Brandl, Brandt, and
  Hofbauer}]{BBH15b}
Brandl, F., Brandt, F., Hofbauer, J., 2015{\natexlab{a}}. Incentives for
  participation and abstention in probabilistic social choice. In: Proc.~of
  14th AAMAS Conference. IFAAMAS, pp. 1411--1419.

\bibitem[{Brandl et~al.(2015{\natexlab{b}})Brandl, Brandt, and
  Hofbauer}]{BBH15d}
Brandl, F., Brandt, F., Hofbauer, J., 2015{\natexlab{b}}. Welfare maximization
  entices participation. Tech. rep., http://arxiv.org/abs/1508.03538.

\bibitem[{Cho(2012)}]{Cho12a}
Cho, W.~J., 2012. Probabilistic assignment: A two-fold axiomatic approach,
  {M}imeo.

\bibitem[{Fishburn and Brams(1983)}]{BrFi83a}
Fishburn, P.~C., Brams, S.~J., 1983. Paradoxes of preferential voting.
  Mathematics Magazine 56~(4), 207--214.

\bibitem[{Gibbard(1977)}]{Gibb77a}
Gibbard, A., 1977. Manipulation of schemes that mix voting with chance.
  Econometrica 45~(3), 665--681.

\bibitem[{Moulin(1988)}]{Moul88b}
Moulin, H., 1988. Condorcet's principle implies the no show paradox. Journal of
  Economic Theory 45, 53--64.

\bibitem[{Moulin(2003)}]{Moul03a}
Moulin, H., 2003. Fair Division and Collective Welfare. The MIT Press.

\bibitem[{Procaccia(2010)}]{Proc10a}
Procaccia, A., 2010. Can approximation circumvent {G}ibbard-{S}atterthwaite?
  In: Proc.~of 24th AAAI Conference. AAAI Press, pp. 836--841.

\bibitem[{Saban and Sethuraman(2014)}]{SaSe13b}
Saban, D., Sethuraman, J., 2014. A note on object allocation under
  lexicographic preferences. Journal of Mathematical Economics 50, 283--289.

\bibitem[{Schulman and Vazirani(2012)}]{ScVa12a}
Schulman, L.~J., Vazirani, V.~V., 2012. Allocation of divisible goods under
  lexicographic preferences. Tech. Rep. arXiv:1206.4366, arXiv.org.

\end{thebibliography}

		\end{document}